\DeclareMathOperator*{\argmin}{arg\,min}
\newcommand{\om}{\omega}
\begin{document}


\title{The Online Replacement Path Problem}

\author{David Adjiashvili\inst{1} \and Marco Senatore\inst{2}}

\institute{ Institute for Operations Research (IFOR) \\
 Eidgen\"ossische Technische Hochschule (ETH) Z\"urich \\
 R\"amistrasse 101, 8092 Z\"urich, Switzerland \\
 \email{david.adjiashvili@ifor.math.ethz.ch} 
\and
 Dipartimento di Informatica, Sistemi e Produzione \\
 Universit\`a di Roma “Tor Vergata“ \\
 Via del Politecnico 1, 00133 Rome, Italy \\
 \email{senatore@disp.uniroma2.it}
}

\maketitle

\begin{abstract}
We study a natural online variant of the replacement path problem. The \textit{replacement path problem} asks to find
for a given graph $G = (V,E)$, two designated vertices $s,t\in V$ and a shortest $s$-$t$ path $P$ in $G$, a
\textit{replacement path} $P_e$ for every edge $e$ on the path $P$. The replacement path $P_e$ is simply a shortest
$s$-$t$ path in the graph, which avoids the \textit{failed} edge $e$. We adapt this problem to deal with the natural
scenario, that the edge which failed is not known at the time of solution implementation. Instead, our problem
assumes that the identity of the failed edge only becomes available when the routing mechanism tries to cross the
edge. This situation is motivated by applications in distributed networks, where information about recent changes in the
network is only stored locally, and fault-tolerant optimization, where an adversary tries to delay the
discovery of the materialized scenario as much as possible. Consequently,
we define the \textit{online replacement path problem}, which asks to find a nominal $s$-$t$ path $Q$ and detours
$Q_e$ for every edge on the path $Q$, such that the worst-case arrival time at the destination is minimized. Our main
contribution is a label setting algorithm, which solves the problem in undirected graphs
in time $O(m \log n)$ and linear space for all sources and a single destination.
We also present algorithms for extensions of the model to any bounded number of failed edges. 

\end{abstract}


\section{Introduction}\label{sec:inro}

Modeling the effects of limited reliability of networks in modern routing schemes is important from the 
point of view of most applications. It is often unrealistic to assume that the nominal network known at 
the stage of decision making will be available in its entirety at the stage of solution implementation. 
Several research directions have emerged as a result. The main paradigm in most works is to obtain a certain 
'fault-tolerant' or 'redundant' solution, which takes into account a certain set of likely network realizations at the 
implementation phase. One important example is the \textit{replacement path problem} (RP) \cite{RepPathFirst,RepPathUndir}. 
The input in RP 
is a nominal network given as a graph $G=(V,E)$, a source $s$, a destination $t$ and one shortest $s$-$t$ path
 $P$. The goal is to find for every edge $e$ on the path $P$, a shortest path in $G$, which does not use the
 edge $e$. RP attempts to model the situation in which any link in the network may fail before the routing 
process starts. 
It is hence desirable to compute in advance the shortest replacement paths, for the case of a failure
of any one of the edges in the nominal path $P$.
In the event of a failure the routing mechanism simply chooses the corresponding 
pre-computed path. The applicability of RP is, however, limited to those situations, in which it is possible 
to know the identity of the failed link before the routing process starts. This assumption 
is not realistic in many important applications, in which faults in the network occur 'online', or the information
 about them is stored in a distributed fashion. The latter situation is commonplace, for example, in 
transportation networks (e.g. accidents in road networks). Furthermore, it is a common feature of very 
large networks, such as the Internet. This paper studies the \textit{online replacement path problem} 
(ORP), which captures this online failure setting. The most notable difference between RP and ORP is that 
in ORP we assume that the routing mechanism is informed about the failed link at the moment it tries to use
it. Another important difference is related to the nominal path $P$. In RP a certain nominal shortest
path is provided in the input. This is no longer the case in ORP, since the detour taken in the event of
a failure does not always start from the source vertex $s$. This means that in ORP we simultaneously 
optimize both the nominal path and the optimal detours, taking into account a certain global objective function.
An informal formulation of ORP is as follows. We would
like to route a certain package through a network from a given source to a given destination as quickly
as possible. We are aware of the existence of a failed link in the network, but we do not know its
location. It is possible to observe that a certain link has failed by \textit{probing} it. In order to probe a
link, the package should be at one of the endpoints of this link. If a probed link is intact, the package
crosses the link to the other endpoint and the cost of traversing the link in incurred. 
Otherwise the package stays in the same endpoint
and the routing mechanism is informed about the failed link. In other words, it is only possible to observe that
a link has failed by trying to cross it.
The goal is to find a set of paths (a nominal path and detours for every edge on the nominal path)
that will minimize the latest possible arrival time to the destination. 
A solution of ORP should hence specify both the nominal path and the optimal detours at every vertex along the path, 
which avoid the next edge on the nominal path. 

The latter informal definition suggests that in ORP we take a conservative fault-tolerant approach.
In fact, it is assumed that a failed link does exist in the network, but its identity is unknown.
This suggests another application of ORP. In many applications it is only necessary to route a certain
object within a certain time, called a \textit{deadline}. As long as the object reaches its destination before the
deadline, no penalty is incurred. On the other hand, if the deadline is not met, a large penalty is due.
An example of such an application is organ transportation for transplants 
(see e.g. Moreno, Valls and Ribes~\cite{OrganTrans}), in which it is critical to deliver
a certain organ before the scheduled time for the surgery. In this application it does not matter how early the organ
arrives at the destination, as long as it arrives in time. In such applications it is
often too risky to take an unreliable shortest path, which admits only long detours in some scenarios, 
whereas a slightly longer path with reasonably short detours meets the deadline in \textit{every} scenario.
Hence, with ORP it is often possible to immunize the path against faults in the network.

The main result of this paper is that the solution to ORP in undirected networks can be computed in 
$O(m \log n)$ time and linear space for all sources and a single destination, where $n$ and $m$ are the number of 
vertices and edges in the network, respectively. Furthermore, this solution can be stored in $O(n)$ space.
The basic algorithm is a label-setting algorithm, similar to Dijkstra's algorithm
for ordinary shortest paths. We describe this algorithm in Section~\ref{sec:k_is_one}.
The main technical difficulty lies in the need to pre-compute certain
shortest path distances in modified graphs. This difficulty is overcome in Section~\ref{sec:impl}, which provides
a fast implementation of this step. We generalize the model to incorporate the possibility of
an arbitrary bounded number $k$ of failed edges in Section~\ref{sec:korp}. This section also gives an alternative 
view on the problem using the notion of \textit{routing strategies}, and provides a polynomial algorithm
for the problem. We mention other results linking ORP with the shortest path problem in Section~\ref{sec:ORPvsSP}.
In particular, we show that it is possible to solve a bi-objective variant of ORP.
This result implies that it is possible to efficiently obtain 
Pareto-optimal paths with respect to ordinary distance and the cost corresponding to the ORP problem,
making it an attractive method for various applications. We also analyze the performance of a greedy 
heuristic which attempts to route along a shortest path in the remaining network. We show that this
heuristic, which is implemented in many applications, is a poor approximation for ORP.   
We summarize in Section~\ref{sec:conclusions}. The following section reviews related
work.

\section{Related Work}\label{sec:relatedwork}

The replacement path problem was first introduced by Nisan and Ronen~\cite{RepPathFirst}. The motivation
for their definition stemmed from the following question in auction theory: what is the true price of a link
in a network, when we try to connect two distinct vertices $x$ and $y$, and every edge is owned by a
self-interested agent. It turns out that compensating agents with respect to the declarations of other agents
in the auction leads to truthful declarations, namely declarations which reflect the true costs of the agent.
Such pricing schemes are called \textit{Vickrey schemes}. In the setup of networks, replacement paths lengths
correspond to Vickrey prices for the individual failed edges.
Another important application of RP is the \textit{$k$ shortest simple paths problem} (kSSP), which reduces to
$k$ replacement path computations. 

The complexity of the RP problem for undirected graphs is well understood. 
Malik, Mittal and Gupta~\cite{FirstUndirRP} give a simple $O(m + n \log n)$ algorithm. A mistake
in this paper was later corrected by Bar-Noy, Khuller and Schieber~\cite{FirstUndirRPFix}.
This running time
is asymptotically the same as a single source shortest path computation. 
Nardelli, Proietti and Widmayer~\cite{NodeRPUndir} later provided an algorithm with the
same complexity for the variant of RP, in which vertices are removed instead of edges.
The same authors give efficient algorithms for finding detour-critical edges for a given shortest path
in~\cite{DetourCritical1,DetourCritical2}.

In directed graphs the situation is significantly different. 
A trivial upper bound for RP corresponds to $O(n)$ single shortest path computations. This
gives $O(n(m + n \log n))$ for general directed graphs with nonnegative weights. This was slightly
improved to $O(mn + n^2 \log \log n)$ by Gotthilf and Lewenstein~\cite{GenRepPathImpr}. The challenge
of improving the $O(mn)$ bound for RP on directed graphs was mainly tackled by restricting the
class of graphs or by allowing approximate solutions. Along the lines of the former approach, algorithms
were developed for unweighted graphs (Roditty and Zwick~\cite{DirUnweightedRP}) and planar graphs 
(Emek, Peleg and Roditty~\cite{DirRPPlan1}, Klein, Mozes, and Weimann~\cite{DirRPPlan2} and 
Wulff-Nilsen~\cite{DirRPPlan3}). The latter approach was successfully applied to obtain $\frac{3}{2}$-approximate
solutions by Roditty~\cite{3over2RP} and $(1+\epsilon)$-approximate solutions by Bernstein~\cite{1plusEpsRP}.
Weimann and Yuster~\cite{RPMatMul} applied fast matrix multiplication techniques to obtain a randomized algorithm with
sub-cubic running time for certain ranges of the edge weights.

Another problem which bears resemblance to ORP is the \textit{stochastic shortest path with recourse problem} (SSPR),
studied by Andreatta and Romeo~\cite{StochSPRecourse}. 
This problem can be seen as the stochastic analogue of ORP.
Finally, we briefly review some related work on robust counterparts of the shortest path problem.
The shortest path problem with cost uncertainty was studied by Yu and Yang~\cite{RSP}, who
consider several models for the scenario set. These results were later extended by
Aissi, Bazgan and Vanderpooten~\cite{MinMaxRegretSP}.
These works also considered a two-stage min-max regret criterion.
Dhamdhere, Goyal, Ravi and Singh~\cite{DemandR} developed the demand-robust model and gave an approximation
algorithm for the shortest path problem. A two-stage feasibility counterpart of the shortest path problem 
was addressed in Adjiashvili and Zenklusen~\cite{AdaptRob}. Puhl~\cite{RRSP} provided hardness results for 
numerous two-stage counterparts the shortest path problem, and gave some approximation algorithms.

\section{An Algorithm for ORP}\label{sec:k_is_one}

In this section we develop an algorithm for ORP. Some technical proofs are left in Appendix~\ref{app:proofs}.
Let us establish some notations first. 
We are given an undirected edge-weighted graph $G=(V,E,\ell)$, a source $s\in V$ and destination $t\in V$. 
We are assuming throughout this paper that the edge weights $\ell$ are nonnegative.
For two vertices $u,v \in V$ let $\mathcal{P}_{u,v}$ denote the set of simple $u$-$v$ paths in $G$. 
Let $N(u)$ denote the set of neighbors of $u$ in $G$. For a set of edges 
$A \subset E$ let $\ell(A) = \sum_{e\in A} \ell(e)$. For an edge $e \in E$ and a set 
of edges $F \subseteq E$, let $G-e$ and $G-F$ denote the graph obtained by removing the edge $e$ and the edges in $F$, 
respectively. For a 
graph $H$ let $d_H(\cdot,\cdot)$ denote the shortest path distance in $H$. Paths are always represented 
as sets of edges, while walks are represented as sequences of vertices. For a path $P$ with incident vertices 
$u$ and $v$ let  $P[u,v]$ denote the subpath of $P$ from $u$ to $v$. For an edge $e\in E$ and $u\in V$ let
$$
s^{-e}_u = d_{G-e}(u,t)
$$
denote the shortest $u$-$t$ path distance in $G-e$.

Our algorithm uses a label-setting approach, analogous to Dijkstra's algorithm for shortest paths. In other
words, in every iteration the algorithm updates certain tentative labels for the vertices of the graph, and fixes
a final label to a single vertex $u$. This final label represents the connection cost of $u$ by an optimal path 
to $t$.

\begin{definition}\label{def:robustlenght}
Given a vertex $v \in V$, the \textit{robust length} of the $v$-$t$ path $P$ is 
$$
\mathrm{Val}(P) =  \max\{\ell(P), \max_{uu'\in P} \{\ell(P[v,u]) + s^{-uu'}_u\}\}.
$$
The potential $y(v)$ is defined as the minimum of $\mathrm{Val}(P)$ over all $P\in \mathcal{P}_{v,t}$, and
any path $P^*$ attaining $\mathrm{Val}(P^*) = y(v)$ is called an optimal nominal path. Finally, $ORP$ is to
compute $y(s)$ and obtain a corresponding optimal nominal path.
\end{definition}
The robust length of a $v$-$t$ path $P$ is simply the maximal possible cost incurred by following $P$ until a 
certain vertex, and then taking the best possible detour from that vertex to $t$ which avoids the next edge on 
the path. To avoid confusion, we stress that in ORP we assume the existence of at most one failed edge in the graph.
Consider next a scenario in which an edge $uu'\in P$ fails and let $u\in V$ be the vertex which is 
closer to $v$. We can assume without loss of generality that the best detour is 
a shortest $u$-$t$ path in the graph $G-uu'$. Critically, the values $s^{-uu'}_u$ are 
independent of the chosen path.

Observe that from non-negativity of $\ell$ we obtain $\mathrm{Val}(P) \geq \mathrm{Val}(P')$, whenever $P$
and $P'$ are $u$-$t$ and $v$-$t$ paths respectively, and $P'$ is a subpath of $P$.
We denote this property by \textit{monotonicity}. Furthermore, we can prove the following.

\begin{lemma}\label{lem:Tproperty}
Let $P_u \in \mathcal{P}_{u,t}$ and let $v \in N(u)$ be a vertex, not incident to $P_u$.
Then the path $P_v = P_u \cup \{vu\}$ satisfies
\begin{equation*}
\mathrm{Val}(P_v)=\max\{\ell(vu)+\mathrm{Val}(P_u),s^{-vu}_{v} \}. 
\end{equation*} 
\end{lemma}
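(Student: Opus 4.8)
The plan is to establish the identity by directly expanding Definition~\ref{def:robustlenght} for $\mathrm{Val}(P_v)$ and splitting the outer maximum according to which edge of $P_v$ is being probed. First I would observe that, since $v \in N(u)$ is not incident to $P_u$, the edge set $P_v = P_u \cup \{vu\}$ is a simple $v$-$t$ path, so Definition~\ref{def:robustlenght} applies verbatim; moreover the edge $vu$ occurs exactly once in $P_v$, and its endpoint closer to the start vertex $v$ is $v$ itself.

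Next I would partition the edges of $P_v$ into the single edge $vu$ and the edges of $P_u$. For the edge $vu$, the corresponding term in the definition of $\mathrm{Val}(P_v)$ is $\ell(P_v[v,v]) + s^{-vu}_v = s^{-vu}_v$. For any edge $ww' \in P_u$, with $w$ the endpoint closer to $u$ along $P_u$ (hence also the endpoint closer to $v$ along $P_v$), we have $P_v[v,w] = \{vu\} \cup P_u[u,w]$, so $\ell(P_v[v,w]) = \ell(vu) + \ell(P_u[u,w])$ and the corresponding term equals $\ell(vu) + \bigl(\ell(P_u[u,w]) + s^{-ww'}_w\bigr)$. Since also $\ell(P_v) = \ell(vu) + \ell(P_u)$, factoring $\ell(vu)$ out of every term other than the $vu$-term gives
\[
\mathrm{Val}(P_v) = \max\Bigl\{\, s^{-vu}_v,\ \ell(vu) + \ell(P_u),\ \ell(vu) + \max_{ww' \in P_u}\{\ell(P_u[u,w]) + s^{-ww'}_w\}\,\Bigr\}.
\]

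Finally I would recognize the last two entries of this maximum as $\ell(vu) + \mathrm{Val}(P_u)$, using the definition of $\mathrm{Val}(P_u)$, which yields the claimed formula $\mathrm{Val}(P_v) = \max\{\ell(vu) + \mathrm{Val}(P_u),\ s^{-vu}_v\}$. The argument is essentially pure bookkeeping, so there is no genuine obstacle; the only points deserving care are verifying that $P_v$ is simple so that Definition~\ref{def:robustlenght} is literally applicable, and correctly identifying the endpoint of each probed edge that is nearer to $v$ (in particular for the new edge $vu$), so that the subpath lengths $\ell(P_v[v,\cdot])$ are measured from the right vertex.
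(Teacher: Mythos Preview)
Your proposal is correct and follows essentially the same approach as the paper: both arguments expand Definition~\ref{def:robustlenght}, split the maximum over edges of $P_v$ into the single new edge $vu$ (contributing $s^{-vu}_v$) and the edges of $P_u$ (each shifted by $\ell(vu)$), and then recognise the latter block as $\ell(vu)+\mathrm{Val}(P_u)$. The paper merely runs the computation in the reverse direction, starting from $\ell(vu)+\mathrm{Val}(P_u)$ and expanding until it matches $\mathrm{Val}(P_v)$, but the content is identical.
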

Our algorithm for ORP updates the potential on the vertices of the graph, using the property 
established by the following lemma.

\begin{lemma}\label{lem:dp_operator}
Let $U\subset V$, with $t \in U$, be the set of vertices for which the potential is known. And let $uv$ be the edge such that:
\begin{equation}\label{eq:DPequation}
uv = \argmin_{zw\in E: w\in U,z \in V \setminus U}\{\max\{\ell(zw)+y(w),s^{-zw}_{z}\}\}.
\end{equation}
Then if $P_u \in \mathcal{P}_{u,t}$ with $\mathrm{Val}(P_u) = y(u)$ and $P_v=P_u \cup \{uv\}$
it holds that $\mathrm{Val}(P_v) = y(v)$. 
\end{lemma}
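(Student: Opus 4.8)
The goal is to show that the Dijkstra-style relaxation in Lemma~\ref{lem:dp_operator} is correct: the vertex $v$ chosen by the $\argmin$ in~\eqref{eq:DPequation}, together with the path $P_v = P_u \cup \{uv\}$, receives the correct potential $y(v)$. The plan is the standard two-direction argument for label-setting correctness. First, by Lemma~\ref{lem:Tproperty}, since $u$ lies outside $U$ and $P_u$ is a path to $t$ with $\mathrm{Val}(P_u)=y(u)$ entirely inside $U \cup \{u\}$ (this needs a small inductive observation — see below), we get $\mathrm{Val}(P_v) = \max\{\ell(uv)+y(u), s^{-uv}_v\}$, which equals exactly the quantity minimized in~\eqref{eq:DPequation}. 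So $\mathrm{Val}(P_v)$ equals the $\argmin$ value; call it $\lambda$. Hence $y(v) \le \lambda$, and it remains to prove $y(v) \ge \lambda$, i.e.\ that \emph{no} $v$-$t$ path does better than $\lambda$.

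**The lower bound direction.**

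Let $Q$ be any $v$-$t$ path; I want $\mathrm{Val}(Q) \ge \lambda$. Walk along $Q$ starting from $v$ (which is outside $U$) until it first enters $U$; let $zw \in Q$ be the edge crossing the cut, with $z \in V\setminus U$ and $w \in U$. Two cases. If $zw$ is \emph{not} the first edge of $Q$, then consider the scenario in which the edge of $Q$ \emph{immediately after} $v$ fails: the detour term gives $\mathrm{Val}(Q) \ge s^{-vv'}_v$ where $vv'$ is that first edge; but $vv'$ is itself a candidate edge for the $\argmin$ only if $v' \in U$ — so this case needs care. The cleaner approach: use monotonicity. The subpath $Q[w,t]$ is a $w$-$t$ path with $w \in U$, so $\mathrm{Val}(Q[w,t]) \ge y(w)$ by optimality of $y(w)$. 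Then by monotonicity applied along $Q[v,w]$, and by the Lemma~\ref{lem:Tproperty}-style recursion, $\mathrm{Val}(Q) \ge \max\{\ell(Q[v,w]) + y(w),\, s^{-zw}_{z}\}$ where the second term comes from the failure scenario of the cut edge $zw$ (the detour from $z$ avoiding $zw$). Since $\ell(Q[v,w]) \ge \ell(zw)$ and $zw$ is an eligible edge in~\eqref{eq:DPequation}, we get $\mathrm{Val}(Q) \ge \max\{\ell(zw)+y(w), s^{-zw}_z\} \ge \lambda$ by the choice of $uv$ as the minimizer. This closes the argument.

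**Anticipated obstacles.**

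I expect two technical points to require the most care. First, invoking Lemma~\ref{lem:Tproperty} requires that $v \notin P_u$ and, more subtly, that the optimal path $P_u$ realizing $y(u)$ can be taken to avoid $v$; this should follow from an inductive invariant that the optimal nominal paths for vertices fixed so far stay within the fixed set $U$ together with the newly added vertex, so one should state and carry this invariant explicitly. Second, the lower-bound step implicitly uses that the potential values $y(w)$ for $w \in U$ have already been computed correctly and that the $\argmin$ value $\lambda$ is monotone in the sense that it dominates $y(w)$ for the newly fixed $w$ — i.e.\ the labels are fixed in nondecreasing order of $\lambda$, exactly as in Dijkstra. The non-negativity of $\ell$ and the monotonicity property noted before the lemma are what make this ordering valid; one should verify that $\lambda \ge y(w)$ for all $w \in U$, so that a $v$-$t$ path leaving through the cut cannot "shortcut" below $\lambda$. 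Handling the boundary case where the cut edge $zw$ coincides with the first edge $uv$ of the path (so $Q$ and $P_v$ agree on their first edge) is routine once the monotonic-ordering fact is in place.
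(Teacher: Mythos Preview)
Your proposal is correct and follows essentially the same route as the paper: take an arbitrary competing $v$-$t$ path, locate the first edge $zw$ crossing the cut $(V\setminus U,U)$, bound $\mathrm{Val}$ of the suffix via Lemma~\ref{lem:Tproperty} and the inequality $\mathrm{Val}(Q[w,t])\ge y(w)$, and then invoke the $\argmin$ choice and monotonicity to finish. The paper phrases this as a proof by contradiction but the chain of inequalities is identical. Your ``anticipated obstacles'' (that $v\notin P_u$, and that optimal nominal paths for vertices already in $U$ can be taken inside $U$) are genuine fine points that the paper leaves implicit; carrying the invariant you describe is the clean way to make the argument fully rigorous.
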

Lemma~\ref{lem:dp_operator} provides the required equation for our label-setting algorithm, whose
formal statement is given as Algorithm~\ref{alg:main}. The algorithm iteratively builds up a set
$U$, consisting of all vertices, for which the correct potential value of $y(u)$ was already
computed. The correctness of the algorithm is a direct consequence of Lemma~\ref{lem:dp_operator}.

\begin{algorithm}
\caption{ }\label{alg:main}
\begin{algorithmic}[1]
\STATE{Compute $s^{-uv}_{u}$ for each $uv \in E$.}
\STATE $U = \emptyset$; \quad $W = V$; \quad $y'(t) = 0$; \quad $y'(u)=\infty \,\, \forall u\in V-t$.
\STATE{$successor(u)= \text{NIL} \,\, \forall u\in V$.}
\WHILE{$U \neq V$}
      \STATE{Find $u = \argmin_{z \in W}y'(z)$. }
      \STATE $U = U + u$; \quad $W = W - u$; \quad $y(u)=y'(u)$.
      \FORALL{ $vu \in E$ with $v \in W$}
      	\IF{ $y'(v) > \max\{\ell(vu)+y'(u),s^{-vu}_{v}\}$ }
      		\STATE{$y'(v) = \max\{\ell(vu)+y'(u),s^{-vu}_{v}\}$.}
		\STATE{$successor(v) = u$.}
	\ENDIF
      \ENDFOR
			
\ENDWHILE
\end{algorithmic}
\end{algorithm}
Consider the running time of Algorithm~\ref{alg:main}. We let $n$ and $m$ denote the number of vertices and edges
of the input graph, respectively. An efficient 
implementation of step~$1$ is delayed to the next section, and constitutes the heart of our efficient algorithms. 
For steps~$2$-$10$ we use the implementation of Fredman and Tarjan~\cite{FibHeap} for priority queues (heaps) called 
\textit{Fibonacci Heaps}. 
We adopt here the same implementation that is used to obtain $O(m + n \log n)$ running time for Dijkstra's algorithm. 
We omit the details as they are identical to those in~\cite{FibHeap}. We comment that our implementation of step~$1$
uses $O(m\log n)$ time, hence this computation dominates the running time. In fact, the running time of Algorithm~\ref{alg:main}
remains the same if steps~$2$-$10$ are implemented using simpler data structures, such as Binary Heaps.

Finally, let us remark that Algorithm~\ref{alg:main} works both for directed and undirected graphs. However,
the following section provides an implementation of step~$1$ for undirected graphs only. We remark that in 
directed graphs, Algorithm~\ref{alg:main} can be trivially implemented in time $O(n\mathrm{SP}(n,m))$, 
where $\mathrm{SP}(n,m)$ is the complexity of a single shortest path computation on a graph with $n$ vertices 
and $m$ edges. This is a simple consequence of our first observation in the following section.

\section{Efficient Computation of $s$-Values}\label{sec:impl}

It remains to provide an efficient implementation for the computation of the values $s^{-uu'}_u$.
We will assume a random access machine (RAM) as
a computational model. 

We start by computing the shortest path tree $T$
in $O(m \log n)$ time from every vertex to $t$. Let $d^*(u) = d_G(u,t)$.
Observe that $s^{-uu'}_u = d^*(u)$ holds for every edge $uu'$ outside of $T$.
We can hence concentrate our efforts on computing $s$-values for edges in the tree.
For a vertex $v\in V$ we denote by $T_v \subset T$ the subtree rooted at $v$. We set 
$E' = E\setminus T$.

\begin{lemma}\label{lem:expression_svalues}
Consider a vertex $u \in V$. Let $vw\in E'$ be an edge attaining
\begin{equation}\label{eq:computing_s1}
\min_{vw \in E', v\in T_u w\not\in T_u} \{d_G(u,v) + \ell(vw) + d^*(w)\},
\end{equation}
and let $uu'$ be the first edge on the $u$-$t$ path in $T$. Then 
$$
s^{-uu'}_u = d_G(u,v) + \ell(vw) + d^*(w).
$$
\end{lemma}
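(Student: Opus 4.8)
I would prove the slightly stronger identity
$$
s^{-uu'}_u \;=\; \min_{vw\in E',\, v\in T_u,\, w\notin T_u}\bigl\{d_G(u,v)+\ell(vw)+d^*(w)\bigr\},
$$
from which the statement follows at once by substituting the minimizing edge $vw$ (and observing that when no such non-tree edge exists, $uu'$ is the unique edge of $G$ leaving $T_u$, so it is a bridge separating $u$ from $t$ and both sides are $+\infty$). The identity would be obtained by proving ``$\le$'' and ``$\ge$'' separately.

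Before that I would record one preliminary fact about the tree $T$ (rooted at $t$, so $u'$ is the parent of $u$). For every $v\in T_u$ we have $d_G(u,v)=d_T(u,v)=d^*(v)-d^*(u)$, and the $u$-$v$ path $T[u,v]$ lies entirely inside $T_u$, hence avoids $uu'$: the tree path from $v$ to $t$ passes through $u$, so $d^*(v)=d_T(v,u)+d^*(u)$, and combining this with $d^*(v)\le d_G(u,v)+d^*(u)$ and $d_G(u,v)\le d_T(u,v)$ forces all three quantities to be equal. Symmetrically, for any $w\notin T_u$ the tree path $T[w,t]$ stays inside $T\setminus T_u$ and avoids $uu'$, because $uu'$ is the only tree edge with exactly one endpoint in $T_u$. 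These observations (and undirectedness, used to traverse $T[u,v]$ ``downwards'') are the only structural input needed.

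For ``$\le$'' I would exhibit, for each admissible non-tree edge $vw$, a $u$-$t$ walk in $G-uu'$: follow $T[u,v]$ inside $T_u$, cross $vw$ (which lies in $E'$, hence is not $uu'$), then follow $T[w,t]$ inside $T\setminus T_u$. By the preliminary fact this walk avoids $uu'$ and has length $d_T(u,v)+\ell(vw)+d_T(w,t)=d_G(u,v)+\ell(vw)+d^*(w)$, so $s^{-uu'}_u=d_{G-uu'}(u,t)$ is at most this, and hence at most the minimum. For ``$\ge$'' I would take a shortest $u$-$t$ path $Q$ in $G-uu'$ and let $vw$ be the first edge encountered along $Q$, going from $u$ to $t$, with $v\in T_u$ and $w\notin T_u$; such an edge exists since $Q$ starts in $T_u$ and ends at $t\notin T_u$. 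As $uu'$ is the unique tree edge crossing the boundary of $T_u$ and $Q$ avoids $uu'$, we get $vw\in E'$. Splitting $Q$ at $v$ and $w$ and bounding the three pieces gives $s^{-uu'}_u=\ell(Q)\ge d_G(u,v)+\ell(vw)+d^*(w)\ge \min(\cdot)$.

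I do not expect a genuine obstacle here; the one point requiring care is the equality $d_G(u,v)=d_T(u,v)$ for descendants $v$ of $u$. This is exactly what makes the upper-bound walk have precisely the claimed length while still avoiding $uu'$ — if the $u$-$v$ distance in $G$ were realized only by a path through $uu'$, the ``$\le$'' direction would fail.
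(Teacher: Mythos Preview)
Your proposal is correct and follows essentially the same two-inequality argument as the paper: exhibit a $u$-$t$ path in $G-uu'$ via $T[u,v]\cup\{vw\}\cup T[w,t]$ for the upper bound, and for the lower bound take a shortest $u$-$t$ path in $G-uu'$, find an edge crossing the cut $(T_u,V\setminus T_u)$, and bound the three pieces. You are in fact more careful than the paper, which leaves implicit both that the crossing edge must lie in $E'$ (since $uu'$ is the unique tree edge across the cut) and that $d_G(u,v)=d_T(u,v)$ for $v\in T_u$ so that the upper-bound walk really has the claimed length while avoiding $uu'$.
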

By defining 
$c_{vw} = d^*(v) + d^*(w) + \ell(vw)$
for each edge $vw\in E'$ and substituting in the expression for $s^{-uu'}_u$ obtained in the previous lemma we can write
\begin{equation}\label{eq:computing_s3}
s^{-uu'}_u = c_{vw} - d^*(u),
\end{equation}
which is a convenient expression for computing the $s$-values. Indeed this expression shows that value 
$s^{-uu'}_u$ only depends on the lowest value $c_e$, over all $e$ with exactly one incident vertex in $T_u$.

We leave the remaining details of our implementation to the proof of Theorem~\ref{thm:compexity_s}.
Before stating the theorem, let us define an important ingredient, which is used hereafter.
To efficiently obtain the information, whether a certain edge $e\in E'$ corresponds to a feasible detour for $u$ (namely if exactly
one endpoint of $e$ is in $T_u$), we need an algorithm for computing the \textit{least common ancestor (LCA)} in trees. Given
a tree rooted at a vertex $t$ and two vertices $u,v$ in the tree, the least common ancestor $\mathrm{lca}(u,v)$ of $u$ and $v$ is 
the vertex at which the $t$-$u$ and $t$-$v$ paths diverge in the tree.
For an edge $e$ we write $\mathrm{lca}(e)$ to denote the LCA of the endpoints of $e$. 
It is straightforward to see that an edge $e = vw$ corresponds to a feasible detour for an edge $uu'$ if and only if
either the $v$-$\mathrm{lca}(v,w)$ path or the $w$-$\mathrm{lca}(v,w)$ path in $T$ contain $uu'$.
For this purpose we use the algorithm of Gabow and Tarjan~\cite{LCA},
which uses $O(n + p)$ time to compute the LCA of $p$ pairs of vertices in a rooted tree. In our case we have $p=O(m)$,
since we would like to compute the LCA for every pair of vertices connected by an edge $e\in E'$, hence this computation
takes $O(m)$ time. We assume henceforth that given an edge $e\in E'$ we have access to $\mathrm{lca}(e)$
in constant time. 
\begin{theorem}\label{thm:compexity_s}
The values $s^{-uu'}_u$ can be computed in $O(m \log n)$ time and linear space for all $uu' \in E$. 
Furthermore, they can be stored in a data structure of size $O(n)$.
\end{theorem}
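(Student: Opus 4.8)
The plan is to reduce the whole computation to one single-source shortest-path computation together with one offline ``vertical-path minimum'' problem on the tree $T$. First I would note that the preceding lemmas have already isolated the only hard case. If $uu'\notin T$ then $s^{-uu'}_u=d^*(u)$; and if $uu'\in T$ but $u'$ is the \emph{child} of $u$, then the tree path from $u$ to $t$ goes through $\mathrm{parent}(u)$ and hence avoids $uu'$, so again $s^{-uu'}_u=d^*(u)$. Thus the only nontrivial values are $s^{-uu'}_u$ for $uu'\in T$ with $u'=\mathrm{parent}(u)$, and by Lemma~\ref{lem:expression_svalues} together with~\eqref{eq:computing_s3} these equal $M(u)-d^*(u)$, where
$$
M(u)=\min\{\, c_e \;:\; e\in E',\ e\ \text{has exactly one endpoint in}\ T_u \,\},
$$
with the convention $M(u)=\infty$ when no such edge exists. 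Hence it is enough to compute $M(u)$ for all $u\in V\setminus\{t\}$ within the stated bounds, and then $d^*$, the array $M$, and the parent pointers (all of size $O(n)$) allow any requested $s^{-xy}_z$ to be returned in constant time: if $\mathrm{parent}(z)$ is the other endpoint we return $M(z)-d^*(z)$, and otherwise we return $d^*(z)$. This already gives the $O(n)$-space storage claim.

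Next I would recast the computation of $M$ as a batch of path updates. Using the $\mathrm{lca}$ characterization recalled above, an edge $e=vw\in E'$ with $a=\mathrm{lca}(v,w)$ has exactly one endpoint in $T_u$ exactly when $u$ lies on the tree path from $v$ up to but excluding $a$, or on the tree path from $w$ up to but excluding $a$. So $M$ is obtained by performing, for each $e\in E'$, a ``minimize with $c_e$'' update along two vertical paths of $T$, and then reading the value at every vertex. Since all updates precede all queries, I would process the edges of $E'$ in non-decreasing order of $c_e$: when $e$ is processed, every vertex on its two vertical paths that has not yet received a value is assigned the value $c_e$, which is then necessarily its minimum. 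To assign each vertex only once, I would maintain a disjoint-set structure over $T$ whose sets are maximal blocks of already-finalized vertices, each represented by its nearest not-yet-finalized ancestor; walking a vertical path upward then skips finalized vertices in amortized $O(1)$ each via path compression, and the test $\mathrm{depth}(r)>\mathrm{depth}(a)$ tells us exactly when to stop — it also transparently handles the degenerate cases $v=a$ or $w=a$, where the corresponding vertical path is empty. Since each vertex is finalized at most once and $\mathrm{lca}(e)$ is available in $O(1)$, this sweep runs in near-linear time (e.g.\ $O(m\,\alpha(n))$ with path compression).

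For the running time I would simply add up the pieces: building the shortest-path tree and the values $d^*$ costs $O(m\log n)$; computing all $c_e$ costs $O(m)$; sorting $E'$ costs $O(m\log m)=O(m\log n)$; the LCA precomputation costs $O(n+m)$; and the disjoint-set sweep is near-linear. The total is $O(m\log n)$, and every structure used — $d^*$, the parent and depth arrays, the sorted edge list, the $\mathrm{lca}$ values, and the disjoint-set arrays — occupies $O(n+m)$ space, i.e.\ linear space.

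I expect the heart of the argument to be the offline vertical-path-minimum step: the point is to observe that processing edges in increasing order of $c_e$ turns ``minimum over many overlapping paths'' into ``first touch wins'', and then to implement the upward walks so that finalized vertices are skipped in amortized constant time. Everything else is routine bookkeeping layered on top of a standard Dijkstra run and an off-the-shelf LCA algorithm.
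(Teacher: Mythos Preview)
Your proposal is correct and follows essentially the same approach as the paper: sort $E'$ by $c_e$, process edges in increasing order, and for each edge walk the two vertical paths to $\mathrm{lca}(e)$ while assigning $c_e$ only to vertices not yet touched, so that ``first touch wins'' yields the minimum. The only difference is the skipping mechanism---you use a disjoint-set structure with path compression, whereas the paper maintains an explicitly contracted copy $\bar T$ of the tree with pointers from $T$ to super-vertices in $\bar T$; these are equivalent data structures here, and your union-find formulation is arguably the more standard and more cleanly analyzed of the two.
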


\begin{proof}

The algorithm is summarized as Algorithm~\ref{alg:impl} in Appendix~\ref{svaluesalg}.
The algorithm starts by sorting the set of edges $E'$ according to increasing
order of $c_e$. Let $L$ be the sorted list. This operation (as well as the computation of the
values $c_e$) takes $O(m \log n)$ time.

The algorithm relies on the following fact. Let $e$ be the first edge in the sorted list $L$, which
represents a feasible detour for $uu' \in T$. From the previous discussion, the edge $e$ corresponds to the optimal detour for $uu'$. Furthermore,
if we traverse the two paths from the endpoints of $e$ to $\mathrm{lca}(e)$ we cross the edge $uu'$. The algorithm indeed
performs the latter traversals in a copy of $T$ and marks all edges along the way (which were not marked before, using an edge
that has lower $c$-value) as belonging to the edge $e$. To obtain the desired running time we need to avoid traversing the
same parts of $T$ several times. To achieve this we create a second copy $\bar T$ of the tree $T$. 
For every vertex $u\in T$ we store a pointer $p[u]$ from $u\in T$ to
its copy $\bar u$ in $\bar T$, and another pointer $p[\bar u]$, pointing in the opposite direction. 
We start iterating over the edges in the sorted list $L$. In the beginning of the $i$'th iteration, the first edge $e_i =xy$ in $L$
is removed from $L$, and the algorithm jumps to the corresponding vertices $\bar x = p[x]$ and $\bar y = p[y]$ in $\bar T$. 
The algorithm also memorizes $\bar z = \mathrm{lca}(\bar x,\bar y)$ - 
the vertex in $\bar T$, corresponding to $z = \mathrm{lca}(x,y)$. Next the $\bar x$-$\bar z$ and $\bar y$-$\bar z$ paths are traversed
in $\bar T$, marking all edges along the way as belonging to $e_i$, and computing the $s$-values for them using (\ref{eq:computing_s3}). 
Finally all these aforementioned edges are contracted
in $\bar T$ and the pointers are updated so that every vertex in $T$ always points to its corresponding
super-vertex in $\bar T$ and vice-versa. 
Finally, the list is post-processed
to remove some unnecessary edges in the following way. At every step, the first edge $e = vw$ in the list is inspected.
If $p[v] = p[w]$, this edges is a self-loop in $\bar T$, and it can not correspond to correct $s$-value assignments
anymore. Consequently, this edges is removed from the list and the next edge is inspected. The process ends when
$p[v] \neq p[w]$, or when $L$ is empty. In the latter case the algorithm terminates. This concludes the $i$'th
iteration.

Note that the running time of the $i$'th iteration is proportional to 
the combined length of the $\bar x$-$\bar z$ path, the $\bar y$-$\bar z$ path and the prefix of $L$ that is
deleted in the post-processing. Since we contract every edge we traverse,
in the tree $\bar T$, we never traverse the same edge twice. We conclude that the running time of this last stage of the
algorithm is $O(n + m)$. Figure~\ref{fig:treealg} illustrates the algorithm.

Finally, note that we can store a representation of all $s$-values and the corresponding detours in $O(n)$ space.
This is achieved by storing the tree $T$, and for every vertex $u$ and the corresponding edge $uu'$ in $T$, we
store a pointer to the edge attaining the minimum in (\ref{eq:computing_s1}). The values $s^{-uu'}_u$ can
either be stored in a separate list using $O(m)$ additional space, or computed in constant time from the aforementioned
information, by using (\ref{eq:computing_s3}).

\qed
\end{proof}

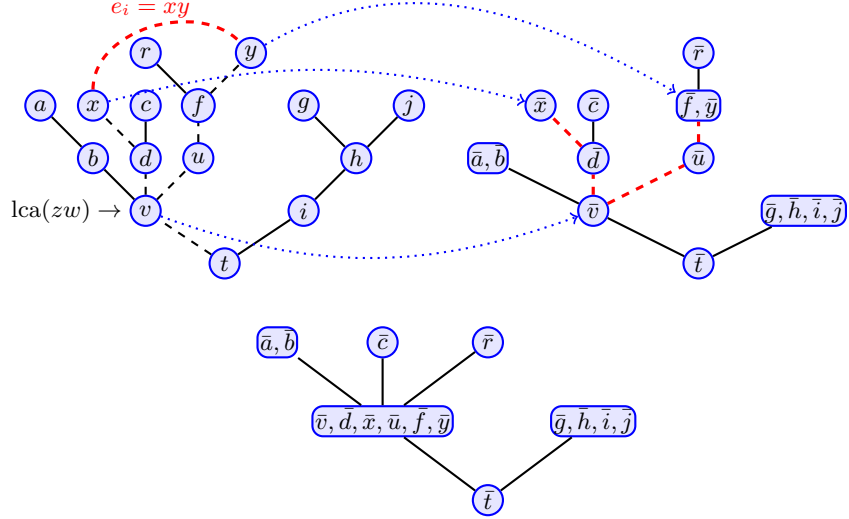
\begin{figure}[h]
\begin{center}
\begin{tikzpicture}[scale=0.7]


\node (t) at (3,0) [circle,draw=blue!100,fill=blue!10,thick,inner sep=1pt,minimum size=4mm] {$t$};

\node (v) at (1.5,1) [circle,draw=blue!100,fill=blue!10,thick,inner sep=1pt,minimum size=4mm] {$v$};
\node (i) at (4.5,1) [circle,draw=blue!100,fill=blue!10,thick,inner sep=1pt,minimum size=4mm] {$i$};

\node (b) at (0.5,2) [circle,draw=blue!100,fill=blue!10,thick,inner sep=1pt,minimum size=4mm] {$b$};
\node (d) at (1.5,2) [circle,draw=blue!100,fill=blue!10,thick,inner sep=1pt,minimum size=4mm] {$d$};
\node (u) at (2.5,2) [circle,draw=blue!100,fill=blue!10,thick,inner sep=1pt,minimum size=4mm] {$u$};
\node (h) at (5.5,2) [circle,draw=blue!100,fill=blue!10,thick,inner sep=1pt,minimum size=4mm] {$h$};

\node (a) at (-0.5,3) [circle,draw=blue!100,fill=blue!10,thick,inner sep=1pt,minimum size=4mm] {$a$};
\node (w) at (0.5,3) [circle,draw=blue!100,fill=blue!10,thick,inner sep=1pt,minimum size=4mm] {$x$};
\node (c) at (1.5,3) [circle,draw=blue!100,fill=blue!10,thick,inner sep=1pt,minimum size=4mm] {$c$};
\node (f) at (2.5,3) [circle,draw=blue!100,fill=blue!10,thick,inner sep=1pt,minimum size=4mm] {$f$};
\node (g) at (4.5,3) [circle,draw=blue!100,fill=blue!10,thick,inner sep=1pt,minimum size=4mm] {$g$};
\node (j) at (6.5,3) [circle,draw=blue!100,fill=blue!10,thick,inner sep=1pt,minimum size=4mm] {$j$};

\node (e) at (1.5,4) [circle,draw=blue!100,fill=blue!10,thick,inner sep=1pt,minimum size=4mm] {$r$};
\node (z) at (3.5,4) [circle,draw=blue!100,fill=blue!10,thick,inner sep=1pt,minimum size=4mm] {$y$};


\draw [-,black,thick,dashed] (t) to node [auto] {} (v);
\draw [-,black,thick] (t) to node [auto] {} (i);

\draw [-,black,thick] (v) to node [auto] {} (b);
\draw [-,black, thick, dashed] (v) to node [auto] {} (d);
\draw [-,black, thick, dashed] (v) to node [auto] {} (u);
\draw [-,black,thick] (i) to node [auto] {} (h);

\draw [-,black,thick] (b) to node [auto] {} (a);
\draw [-,black, thick, dashed] (d) to node [auto] {} (w);
\draw [-,black,thick] (d) to node [auto] {} (c);
\draw [-,black, thick, dashed] (u) to node [auto] {} (f);
\draw [-,black,thick] (h) to node [auto] {} (g);
\draw [-,black,thick] (h) to node [auto] {} (j);

\draw [-,black,thick] (f) to node [auto] {} (e);
\draw [-,black, thick, dashed] (f) to node [auto] {} (z);


\node (pt) at (12,0) [circle,draw=blue!100,fill=blue!10,thick,inner sep=1pt,minimum size=4mm] {$\bar t$};

\node (pv) at (10,1) [circle,draw=blue!100,fill=blue!10,thick,inner sep=1pt,minimum size=4mm] {$\bar v$};
\node (pghij) at (14,1) [rounded corners,draw=blue!100,fill=blue!10,thick,inner sep=1pt,minimum size=4mm] {$\bar g,\bar h,\bar i,\bar j$};

\node (pab) at (8,2) [rounded corners,draw=blue!100,fill=blue!10,thick,inner sep=1pt,minimum size=4mm] {$\bar a,\bar b$};
\node (pd) at (10,2) [circle,draw=blue!100,fill=blue!10,thick,inner sep=1pt,minimum size=4mm] {$\bar d$};
\node (pu) at (12,2) [circle,draw=blue!100,fill=blue!10,thick,inner sep=1pt,minimum size=4mm] {$\bar u$};

\node (pw) at (9,3) [circle,draw=blue!100,fill=blue!10,thick,inner sep=1pt,minimum size=4mm] {$\bar x$};
\node (pc) at (10,3) [circle,draw=blue!100,fill=blue!10,thick,inner sep=1pt,minimum size=4mm] {$\bar c$};
\node (pfz) at (12,3) [rounded corners,draw=blue!100,fill=blue!10,thick,inner sep=1pt,minimum size=4mm] {$\bar f,\bar y$};

\node (pe) at (12,4) [circle,draw=blue!100,fill=blue!10,thick,inner sep=1pt,minimum size=4mm] {$\bar r$};


\draw [-,black,thick] (pt) to node [auto] {} (pv);
\draw [-,black,thick] (pt) to node [auto] {} (pghij);

\draw [-,black,thick] (pv) to node [auto] {} (pab);
\draw [-,red,very thick,dashed] (pv) to node [auto] {} (pd);
\draw [-,red,very thick,dashed] (pv) to node [auto] {} (pu);

\draw [-,red,very thick,dashed] (pd) to node [auto] {} (pw);
\draw [-,black,thick] (pd) to node [auto] {} (pc);
\draw [-,red,very thick,dashed] (pu) to node [auto] {} (pfz);

\draw [-,black,thick] (pfz) to node [auto] {} (pe);

%


\draw [-,red,very thick, dashed, out=90,in=135] (w) to node [above] {$e_i=xy$} (z);

\draw [->,blue,thick,dotted,out=15,in=165] (w) to node [auto] {} (pw);
\draw [->,blue,thick,dotted,out=30,in=150] (z) to node [auto] {} (pfz);
\draw [->,blue,thick,dotted,out=-20,in=-160] (v) to node [auto] {} (pv);

\node (lca) at (0,1) [draw=blue!0,fill=blue!0,thick,inner sep=1pt,minimum size=4mm] {$\mathrm{lca}(zw) \rightarrow$};

\begin{scope}[xshift = -4cm, yshift=-4.5cm]

\node (ct) at (12,0) [circle,draw=blue!100,fill=blue!10,thick,inner sep=1pt,minimum size=4mm] {$\bar t$};

\node (cvdwufz) at (10,1.5) [rounded corners,draw=blue!100,fill=blue!10,thick,inner sep=1pt,minimum size=4mm] {$\bar v,\bar d,\bar x,\bar u,\bar f,\bar y$};
\node (cghij) at (14,1.5) [rounded corners,draw=blue!100,fill=blue!10,thick,inner sep=1pt,minimum size=4mm] {$\bar g,\bar h,\bar i,\bar j$};

\node (cab) at (8,3) [rounded corners,draw=blue!100,fill=blue!10,thick,inner sep=1pt,minimum size=4mm] {$\bar a,\bar b$};
\node (cc) at (10,3) [circle,draw=blue!100,fill=blue!10,thick,inner sep=1pt,minimum size=4mm] {$\bar c$};
\node (ce) at (12,3) [circle,draw=blue!100,fill=blue!10,thick,inner sep=1pt,minimum size=4mm] {$\bar r$};

\draw [-,black,thick] (ct) to node [auto] {} (cvdwufz);
\draw [-,black,thick] (ct) to node [auto] {} (cghij);

\draw [-,black,thick] (cvdwufz) to node [auto] {} (cab);
\draw [-,black,thick] (cvdwufz) to node [auto] {} (cc);
\draw [-,black,thick] (cvdwufz) to node [auto] {} (ce);

\end{scope}

\end{tikzpicture}
\end{center}
\caption{Top left: The shortest path tree $T$. The next edge in $E'$ to be processed is $e_i = xy$. The cost $c_{xy}$
associated with this edge is the sum of weights of all dashed edges, with the edge $vt$ counted twice. $\mathrm{lca}(xy) = v$.
Top right: The tree $\bar T$ in the beginning of the $i$'th iteration. The labels of the super-vertices correspond to the 
sets vertices in $T$ associated with them. The dashed edges are traversed in the $i$'th iteration and contracted to
form the tree on the bottom of the figure.}\label{fig:treealg}
\end{figure}

Note that the running time of Algorithm~\ref{alg:impl} is dominated by the sorting of the costs of the
edges. This fact is a significant advantage since sorting algorithms are very efficient in practice. In fact,
if the sorted list $L$ was provided in the input, the running time of the algorithm could be improved to
$O(m + n\log n)$, via the implementation mentioned in Section~\ref{sec:k_is_one}. Another case in which 
the latter complexity bound can be attained is that of unweighted graphs, where bucketing can be used in
order to sort the costs $c_e$ in linear time. Theorem~\ref{thm:1rta} summarizes our main result.

\begin{theorem}\label{thm:1rta}
Given an instance of ORP the potential $y$ and the corresponding paths can be computed in time $O(m \log n)$.
\end{theorem}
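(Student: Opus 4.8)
The statement is essentially an assembly of the two components already built — the precomputation of Theorem~\ref{thm:compexity_s} and the label-setting scheme of Algorithm~\ref{alg:main} — together with a Dijkstra-style correctness argument. Accordingly, I would first invoke Theorem~\ref{thm:compexity_s} to execute step~$1$ of Algorithm~\ref{alg:main}, obtaining the values $s^{-uv}_u$ for all $uv\in E$ (or the equivalent $O(n)$-space representation) in $O(m\log n)$ time and $O(n)$ space. As already remarked after the statement of the algorithm, this is the only step whose cost can exceed $O(m+n\log n)$, so it will dominate.

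I would then prove by induction on $|U|\ge 1$ that at the start of each iteration of the while-loop the following invariant holds: $y'(u)=y(u)$ for every $u\in U$ (in particular $y'(t)=0=y(t)$, the base case being the state right after the first iteration, which always extracts $t$); and for every $v\in W$, $y'(v)=\min\{\max\{\ell(vw)+y(w),\,s^{-vw}_v\} : w\in U,\ vw\in E\}$, the minimum over the empty set being $\infty$. For the inductive step: since $y'$ restricted to $W$ is exactly this per-vertex minimum, the vertex $u$ picked in step~$5$ is an endpoint of the edge attaining the minimum in~(\ref{eq:DPequation}) over all edges crossing the cut $(W,U)$, so Lemma~\ref{lem:dp_operator} certifies $y'(u)=y(u)$ and step~$6$ is correct; monotonicity (noted after Definition~\ref{def:robustlenght}) guarantees, as in Dijkstra's algorithm, that this label is never lowered again. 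In the for-loop of steps~$7$–$11$, every neighbour $v\in W$ of $u$ is relaxed; since the optimal nominal path $P_u$ stored through the $successor$ pointers lies entirely in $U$, the vertex $v$ is not incident to it, so by Lemma~\ref{lem:Tproperty} the updated quantity $\max\{\ell(vu)+y'(u),\,s^{-vu}_v\}$ equals $\mathrm{Val}(P_u\cup\{vu\})$ — precisely the term that the newly admitted edge $vu$ contributes to the invariant for $v$. Taking the pointwise minimum with the stored $y'(v)$ therefore restores the invariant, and after $n$ iterations $U=V$ with $y'(s)=y(s)$.

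Steps~$2$–$13$ are Dijkstra's algorithm with the relaxation $y'(v)\leftarrow\min\{y'(v),\,\max\{\ell(vu)+y'(u),\,s^{-vu}_v\}\}$, so implementing the queue $W$ with a Fibonacci heap~\cite{FibHeap} gives $n$ delete-min operations at $O(\log n)$ amortized and $O(m)$ decrease-key operations at $O(1)$ amortized, i.e.\ $O(m+n\log n)$ time and $O(n)$ extra space; together with the $O(m\log n)$ of step~$1$ this is $O(m\log n)$ overall. The optimal nominal path $P^*$ is recovered by following $successor$ from $s$ to $t$ (the invariant gives $\mathrm{Val}(P^*)=y(s)$), and for every edge $vv'$ of $P^*$ with $v'=successor(v)$ the detour avoiding it is obtained in constant time from the detour data structure of Theorem~\ref{thm:compexity_s}. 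The only real work is the correctness bookkeeping of the second paragraph — checking that the algorithm's tentative labels coincide with the minima in Lemmas~\ref{lem:Tproperty} and~\ref{lem:dp_operator}; once that is settled, the running-time bound is a routine combination of Theorem~\ref{thm:compexity_s} with the standard Fibonacci-heap analysis of Dijkstra's algorithm.
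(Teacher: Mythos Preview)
Your proof is correct and follows the same approach as the paper, which treats Theorem~\ref{thm:1rta} as a summary result with no separate proof: correctness of Algorithm~\ref{alg:main} is declared a direct consequence of Lemma~\ref{lem:dp_operator}, steps~$2$--$10$ are said to run in $O(m+n\log n)$ via Fibonacci heaps, and step~$1$ costs $O(m\log n)$ by Theorem~\ref{thm:compexity_s}. Your write-up simply spells out the Dijkstra-style induction and the role of Lemma~\ref{lem:Tproperty} in the relaxation step that the paper leaves implicit.
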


\section{$k$-ORP}\label{sec:korp}

Let us formally define $k$-ORP, the online replacement path problem with $k$ failed edges. We refer to the
parameter $k$ as the \textit{failure parameter}.
In $k$-ORP a \textit{scenario} corresponds to a removal of any $k$ of the edges in the graph. In this setup it is no longer
convenient to describe the problem in terms of paths and detours. Instead we introduce the notion of a \textit{routing strategy}.

A routing strategy $R:2^E \times V \rightarrow V$ is a function which, given a subset $F'\subset E$ of known 
failed links and a vertex $v\in V$, returns a vertex $u\in V$. We call $E'=E\setminus F'$ the set of \textit{active edges}. 
We are assuming the existence of a certain governing mechanism, which takes as input a routing strategy and executes it on a 
given instance. Provided with a routing strategy $R$, this mechanism iteratively moves from the current vertex $u$ to the vertex 
$v = R(F',u)$, where $F'$ is the set of failed links probed so far. The process starts at a given origin $s$ with $F' = \emptyset$ 
and ends when $t$ is reached.

Since $R$ is deterministic, this process defines a unique, possibly infinite, walk $\theta_R(u,E,F)$ in $G$ for each origin $u \in V$ 
and every scenario $F \subset E$ with $|F| \leq k$.
We remark that if $G$ contains at least $k+1$ edge-disjoint $s$-$t$ paths, there exists a routing strategy, which does not cycle.
\begin{definition}\label{kdistance}
Given $E' \subset E$, a vertex $u \in V$ and a routing strategy $R$, the \textit{$k$-value} of $R$ with respect to $E'$ and $u$ is defined as
\begin{equation*}
\mathrm{Val}_k(u,E',R) = \max_{F \subseteq E',  |F| \leq k} \ell(\theta_R(u,E',F)).
\end{equation*}
The corresponding $k$-potential of $u$ with respect to $E'$ is defined as
\begin{equation*}
y^k(u,E')=\min_{R} \mathrm{Val}_k(u,E',R).
\end{equation*}
\end{definition}
Finally, $k$-ORP is to find an \textit{optimal routing strategy} $R^*$, which minimizes $\mathrm{Val}_k(s,E,R)$, namely to solve
\begin{equation*}
R^* = \argmin_{R} \mathrm{Val}_k(s,E,R).
\end{equation*}
The following relation, which is a generalization of (\ref{eq:DPequation}), gives rise to a simple
recursive algorithm for $k$-ORP. For each $v \in V$ it holds:
\begin{equation}\label{eq:kDPequation}
y^k(v,E) = \min_{v: vu\in E}\{\max\{\ell(vu)+y^k(u,E),y^{k-1}(v,E \setminus vu)\}\}.
\end{equation}
The correctness of this relation can be proved by induction on $k$,
using the arguments in Section~\ref{sec:k_is_one}. The exact statement of the algorithm is given in
Appendix~\ref{app:general_k}, alongside a complexity analysis. We summarize with the following
theorem.

\begin{theorem}\label{thm:korp}
 \sloppy $k$-ORP can be solved on undirected graphs in time  $O(m^k \log n)$. 
\end{theorem}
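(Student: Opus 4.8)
The plan is to argue by induction on the failure parameter $k$, proving the slightly stronger statement that for \emph{every} undirected graph with at most $m$ edges and at most $n$ vertices and every destination $t$, the $k$-potentials $y^k(v,E)$ together with an optimal routing strategy can be computed for \emph{all} sources $v$ simultaneously in time $O(m^k\log n)$. The all-sources formulation is essential for the induction to close: relation (\ref{eq:kDPequation}) expresses $y^k$ on $G$ in terms of $y^{k-1}$ on the graphs $G-e$, and each of those sub-problems in turn invokes its own level-$(k-2)$ sub-problems on further edge-deleted graphs, so the inductive hypothesis must carry the all-sources guarantee down the recursion tree.

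\textbf{Base case and inductive step.} The base case is $k=1$, which is exactly Theorem~\ref{thm:1rta}: the potential $y^1(\cdot,E)=y$ for all sources, together with the corresponding paths, is computed in $O(m\log n)$. (For completeness, the degenerate case $k=0$ is a single Dijkstra computation of $d_G(\cdot,t)$.) For the inductive step, fix $k\ge 2$ and assume the claim for $k-1$. The algorithm is the level-$k$ analogue of Algorithm~\ref{alg:main}, stated formally in Appendix~\ref{app:general_k}. It first carries out a precomputation step playing the role of step~1: for every edge $e=vu\in E$ it makes one recursive call to the level-$(k-1)$ algorithm on $G-e$, obtaining the values $y^{k-1}(\cdot,E\setminus e)$ for all vertices; these are the level-$k$ analogues of the $s$-values. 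It then runs a Dijkstra-like label-setting loop that maintains a set $U$ of vertices with final potential, and whenever it relaxes an edge $vu$ with $u\in U$ it uses the right-hand side of (\ref{eq:kDPequation}), namely $\max\{\ell(vu)+y^k(u,E),\,y^{k-1}(v,E\setminus vu)\}$, as the candidate value for $v$.

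\textbf{Correctness and running time.} For correctness I would establish the level-$k$ versions of Lemma~\ref{lem:Tproperty} and Lemma~\ref{lem:dp_operator}, exactly along the lines of Section~\ref{sec:k_is_one}. The key ingredient is monotonicity — extending a $v$-$t$ walk by one edge cannot decrease its $k$-value — which follows from $\ell\ge 0$ together with the inductive monotonicity of $y^{k-1}$; granting this, the same cut/exchange argument shows that the vertex minimising $\max\{\ell(zw)+y^k(w,E),\,y^{k-1}(z,E\setminus zw)\}$ over the cut between $V\setminus U$ and $U$ is assigned its correct final potential, which justifies the label-setting order; validity of (\ref{eq:kDPequation}) itself is proved inside the same induction. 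For the running time, let $T(k)$ denote the cost on a graph with at most $m$ edges. The precomputation performs one level-$(k-1)$ recursive call per edge, each costing $O(m^{k-1}\log n)$ by the inductive hypothesis, for a total of $O(m\cdot m^{k-1}\log n)=O(m^k\log n)$; with these values in hand the label-setting loop is an ordinary Dijkstra-type computation costing $O(m+n\log n)=O(m\log n)$ (Fibonacci heaps, or even a binary heap). Hence
\[
T(k)\;=\;m\cdot T(k-1)+O(m\log n),\qquad T(1)=O(m\log n),
\]
which unrolls to $T(k)=m^{k-1}T(1)+O(m\log n)\sum_{i=0}^{k-2}m^{i}=O(m^{k}\log n)$. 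Since $k$-ORP asks only for the single value $y^k(s,E)$ and the strategy, it is solved within this bound.

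\textbf{Main obstacle.} The routine part is the complexity bookkeeping; the factor-$m$ blow-up per level is unavoidable because the precomputation must solve a level-$(k-1)$ sub-problem for each of the $m$ candidate first failures, and the $\log n$ survives only through the $k=1$ base case, where Theorem~\ref{thm:1rta} replaces $m$ separate shortest-path computations by one $O(m\log n)$ pass. The genuine obstacle is the correctness argument: one must verify that the \emph{label-setting} order (as opposed to a slower label-correcting iteration) is valid for the mixed $\min$/$\max$ recursion (\ref{eq:kDPequation}), and this hinges on propagating the monotonicity of $y^{j}$ through the induction on $j$, i.e. on getting the level-$k$ analogues of Lemmas~\ref{lem:Tproperty} and \ref{lem:dp_operator} to go through cleanly.
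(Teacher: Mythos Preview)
Your proposal is correct and follows essentially the same approach as the paper: the paper's Algorithm~\ref{alg:korp} is precisely the level-$k$ label-setting routine you describe, the paper likewise defers correctness to the $k$-analogues of Lemmas~\ref{lem:Tproperty} and~\ref{lem:dp_operator} (and the monotonicity property), and the paper's complexity analysis sets up the same recurrence $T(m,n,k)=O(m+n\log n)+m\,T(m-1,n,k-1)$ with base $T(n,m,1)=O(m\log n)$, unrolling to $O(m^k\log n)$. Your explicit emphasis on the all-sources formulation being needed for the induction to close is a point the paper leaves implicit but is indeed what makes the recursion work.
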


\section{ORP vs. Shortest Paths}\label{sec:ORPvsSP}

As we have seen, the length of a path and its robust length are different, and often conflicting
objectives functions. In this section we mention two results relating these two objectives.

Consider first the problem of finding an optimal solution to an instance of ORP with the 
shortest possible nominal path length. This bi-objective problem asks to find a Pareto-optimal
path with respect to the latter two objective functions. Our first result asserts that this
problem can be solved in polynomial time with a simple adaptation of Dijkstra's algorithm. In fact, 
we are able to solve the following problem for every bound $B \geq OPT$.
$$
P^* = \argmin_{P\in \mathcal{P}_{s,t}: \,\, \mathrm{Val}(P) \leq B}{l(P)}.
$$
The algorithm and possible applications of this problem are given in Appendix~\ref{app:pareto}.

The second result analyzes the performance of certain greedy heuristics for $k$-ORP. We show
that a routing strategy that always tries to route along the shortest path in the remaining 
graph is a $(2^{k+1}-1)$-approximation algorithm for $k$-ORP. The details are available in Appendix~\ref{app:sp_heuristic}.

\section{Conclusions}\label{sec:conclusions}

This paper introduces a natural variant of the replacement path problem, the online replacement path problem.
ORP captures many real-life situations, which occur in faulty or large distributed networks. The most important
characteristic of ORP is that the information about the failed edge in the network is only available locally, namely
on the edge itself. 
ORP possesses many of the nice characteristics of ordinary shortest paths. In particular, we show that ORP 
can be solved by a simple label-setting algorithm. The computational bottleneck of this algorithm is the need to
pre-compute certain shortest paths in an adapted network. We give an efficient implementations of this step, which uses
$O(m \log n)$ time and linear space. 
We also generalize the algorithm to deal with an arbitrary constant number $k$ of failed edges.
Along this vein we introduce the notion of a routing strategy. Finally, we observe that a Pareto-optimal path
with respect to ordinary distance and robust length can be found in polynomial time. 

We conclude by mentioning a number of promising directions for future research. The complexity of ORP
with a variable number $k$ of failed edges remains open. In fact, it is not clear if
the decision problem $y^k(s,E) \leq M$ is in NP. The complexity of ORP in undirected graphs may potentially be
improved to $O(m + n \log n)$. Finally, the complexity of ORP in directed graphs
remains open. Our results give an algorithm with the running time of $O(n \mathrm{SP}(n,m))$, where $\mathrm{SP}(n,m)$
is the complexity of a single shortest path computation on a graph with $n$ vertices and $m$ edges.
This problem threatens to be as challenging as RP in directed graphs.

\bibliographystyle{plain}
\bibliography{lit}	


\appendix

\section{Proofs}\label{app:proofs}

\subsection{Proof of Lemma~\ref{lem:Tproperty}}
Applying the definition we compute
\begin{equation*}
\begin{array}{lll}
\ell(vu)+\mathrm{Val}(P_u)&=& \ell(vu)+\max\{\ell(P_u),\max_{zw\in P_u}\{\ell(P_u[u,z])+ s^{-zw}_{z} \}\} \\
&=& \max\{\ell(vu)+\ell(P_u),\max_{zw\in P_u}\{\ell(vu)+ \ell(P_u[u,z])+ s^{-zw}_{z} \}\}\\
&=& \max\{\ell(P_v),\max_{zw\in P_v \setminus \{vu\}}\{\ell(P_v[v,z])+ s^{-zw}_{z} \}\}.
\end{array}
\end{equation*}

Substituting in the desired expression we obtain

\begin{equation*}
\begin{array}{lll}
\max\{\ell(vu)+\mathrm{Val}(P_u),s^{-vu}_{v} \} = \\ 
\max\{\ell(P_v),\max_{zw\in P_v \setminus \{vu\}}\{\ell(P_v[u,z])+ s^{-zw}_{z} \}, s^{-vu}_{v} \} = \mathrm{Val}(P_v),
\end{array}
\end{equation*}
which proves the lemma.

\subsection{Proof of Lemma~\ref{lem:dp_operator}}
Assume towards contradiction that there exists a path $P^*_v\in \mathcal{P}_{v,t}$ such that $\mathrm{Val}(P^*_v)<\mathrm{Val}(P_v)$. 
Let $w\in U$ and $z\in V\setminus U$ be two vertices such that $zw \in P^*_v$. Consider the partition of $P^*_v$ given by 
$P^*_v = P^*_v[v,z] \cup \{zw\} \cup P^*_v[w,t]$ (Note that if $w=t$ then $P^*_v[w,t] = \emptyset$ and if 
$z=v$ then $P^*_v[v,z] = \emptyset$).

By the choice of $vu$ we have
\begin{equation*}\label{uno}
\max\{\ell(vu)+y(u),s^{-vu}_{v}\} \leq \max\{\ell(zw)+y(w),s^{-zw}_{z}\},
\end{equation*}
which, by $\mathrm{Val}(P_u) = y(u)$ and $\mathrm{Val}(P^*_v[w,t]) \geq y(w)$ implies
\begin{equation*}\label{due}
\max\{\ell(vu)+\mathrm{Val}(P_u),s^{-vu}_{v}\} \leq \max\{\ell(zw)+\mathrm{Val}(P^*_v[w,t]),s^{-zw}_{z}\}.
\end{equation*}

The latter inequality and Lemma~\ref{lem:Tproperty} give $\mathrm{Val}(P_v) \leq \mathrm{Val}(P^*_v[z,t])$. On the other hand, by monotonicity we
have $\mathrm{Val}(P^*_v) \geq \mathrm{Val}(P^*_v[z,t])$. We conclude that $\mathrm{Val}(P_v) \leq \mathrm{Val}(P^*_v[z,t]) \leq \mathrm{Val}(P^*_v)$; a contradiction.

\subsection{Proof of Lemma~\ref{lem:expression_svalues}}
Clearly this choice represents a $u$-$t$ path
in $G-uu'$, so $s^{-uu'}_u \leq d_G(u,v) + \ell(vw) + d^*(w)$ holds. 
Assume towards contradiction that a better $u$-$t$ path $P$ existed in $G-uu'$, namely 
$\ell(P) < d_G(u,v) + \ell(vw) + d^*(w)$. 
To reach $t$ from $u$ without using $uu'$ the path $P$ needs to contain an edge $v'w'$, such that $v'\in T_u$ and 
$w'\not\in T_u$. We obtain
\begin{equation*}
\begin{array}{lll}
\ell(P) = \ell(P[u,v']) + \ell(v'w') + \ell(P[w',t]) &\geq&  d_G[u,v'] + \ell(v'w') + d^*(w') \\
 &\geq&  d_G(u,v) + \ell(vw) + d^*(w),
\end{array}
\end{equation*}
which contradicts the choice of $P$. 

\newpage

\section{A Summary of the Algorithm in Section~\ref{sec:impl}}\label{svaluesalg}

\begin{algorithm}
\caption{ }\label{alg:impl}
\begin{algorithmic}[1]
\STATE{Compute the shortest path tree $T$.}
\STATE{Create a copy $\bar T$ of $T$.}
\STATE{Store a pointer $p[u]$ in every vertex $u$ in $T$ to its corresponding vertex $\bar u$ in $\bar T$.}
\STATE{Store a pointer $p[\bar u]$ in every vertex $\bar u$ in $\bar T$ to its corresponding vertex $u$ in $T$.}
\STATE{Store similar pointers $p[\bar e]$ from edges in $\bar T$ to corresponding edges in $T$.}
\STATE{Compute $c_e$ for every $e\in E'$.}
\STATE{Sort $E'$ according to increasing order of $c_e$. Let $L$ be the sorted list.}
\WHILE{$\bar T$ contains more than one vertex}
  \STATE{Remove the first edge $e = xy$ from $L$.}        
  \STATE{$\bar x = p[x]$, \quad $\bar y = p[y]$.}
  \STATE{$z = \mathrm{lca}(e)$, \quad $\bar z = p[\mathrm{lca}(e)]$.}
  \STATE{Let $Y \subset \bar T$ denote the union of the $\bar x$-$\bar z$ and $\bar y$-$\bar z$ paths in $\bar T$.}
  \FOR{$\bar a\in Y$}
    \STATE{$uu' = p[\bar a]$.}
    \STATE{$s^{-uu'}_u = c_{e} - d^*(u)$.}
  \ENDFOR
  \STATE{Contract $Y$ in $\bar T$ and update pointers.}
  \REPEAT
    \STATE{Let $e=vw$ be the first edge in $L$.}
    \IF{$p[v] = p[w]$}
      \STATE{Remove $e$ from $L$.}
    \ENDIF
  \UNTIL{$p[v] \neq p[w]$ \OR $L = \emptyset$.}
\ENDWHILE
\end{algorithmic}
\end{algorithm}

\section{An Algorithm for $k$-ORP}\label{app:general_k}

Algorithm~\ref{alg:korp} solves (\ref{eq:kDPequation}) for every $v\in V-t$ and computes the corresponding
optimal routing strategy. Note that Algorithm~\ref{alg:korp} is identical to Algorithm~\ref{alg:main}
when $k=1$.
To formally prove the correctness of Algorithm~\ref{alg:korp} one needs to state equivalent monotonicity properties, as well
as analogues of Lemma~\ref{lem:Tproperty} and Lemma~\ref{lem:dp_operator}. They are however omitted, as they are identical to 
those of Section~\ref{sec:k_is_one}.

\begin{algorithm}
\caption{ }\label{alg:korp}
\begin{algorithmic}[1]
\STATE{Compute $y^{k-1}(u,E \setminus uv)$ for each $uv \in E$.}
\STATE{$U = \emptyset$.}
\STATE{$W = V$.}
\STATE{$y^{k}(t,E) = 0$.}
\STATE{$y^{k}(u,E) = \infty$ for each $u \in W$.}
\WHILE{$U \neq V$}
 
      \STATE{Find $u = \argmin_{z \in W}\{y^{k}(z,E)\}$. }
      \STATE{$U = U + u$.}
      \STATE{$W = W - u$.}
      \FORALL{ $vu \in E$ such that $v \in W$}
      	\IF{ $y^{k}(v,E) > \max\{\ell(uv)+y^{k}(u,E),y^{k-1}(v,E\setminus \{vu\})\}$ }
      		\STATE{$y^{k}(v,E) = \max\{\ell(uv)+y^{k}(u,E),y^{k-1}(v,E\setminus \{vu\})\}$.}
				\ENDIF
			\ENDFOR
			
\ENDWHILE
\end{algorithmic}
\end{algorithm}

We conclude by bounding the complexity of a naive implementation of this algorithm. Let $T(m,n,k)$ denote the running time 
of the algorithm on a graph with $n$ vertices and $m$ edges and failure parameter $k$. The algorithm in Section~\ref{sec:impl} gives 
$T(n,m,1) = O(m \log n)$. For $k>1$ we have 
$T(m,n,k) = O(m + n \log n) + m T(m-1,n,k-1) = O(m^{k-1}T(n,m,1))$. 
\sloppy This finally gives $T(n,m,k) = O(m^k \log n)$. 

\section{Pareto-Optimal ORPs}\label{app:pareto}

In this section we are concerned with obtaining a path $P^*$ with robust length at most $B$, and
a shortest nominal path length among all such paths. 
This problem is equivalent to that of finding a Pareto-optimal path with respect to the objective functions
corresponding to the ordinary distance function, and the robust length. 
We assume $B\geq OPT$, the
optimal solution value of the corresponding ORP instance.
Formally, we aim at finding an $s$-$t$ path $P^*$ satisfying

$$
P^* = \argmin_{P\in \mathcal{P}_{s,t} : \,\, \mathrm{Val}(P)\leq B}{l(P)}.
$$

The algorithm for this problem is a slightly modified version of Dijkstra's algorithm on the directed edge-weighted 
graph $G'=(V,K, \omega)$, where $K$ has two directed edges $uv$ and $vu$ for each undirected edge $uv \in E$, with $\om(uv)=\om(vu)=\ell(uv)$.
Algorithm~\ref{alg:pareto} is a formal statement of the algorithm. Note that the only difference with Dijkstra's algorithm 
is the condition in step $8$. Unlike ordinary shortest paths, when a vertex $u$ is selected, and the distance labels of its neighbors
are updated, it is not sufficient to check for each $uv \in K$ the usual condition
$$d(u)+ \om(uv) \leq d(v).$$
We need to additionally verify whether the edge $uv$ can belong to a path with robust length
$B$ or not. In other words, we need to check whether the length of the path from $s$ to $u$ plus the length of the detour from $u$ to $t$ 
avoiding $uv$ is at most $B$, namely 
$$d(u) + s^{-uv}_{u} \leq B.$$
Algorithm~\ref{alg:pareto} can clearly be implemented to run in $O(m + n \log n)$ time using the results
of Section~\ref{sec:impl}.

\begin{algorithm}
\caption{}\label{alg:pareto}
\begin{algorithmic}[1]
\STATE{ $S=\emptyset$;  \quad $\bar S = V$}
\STATE{ $d(s) = 0$; \quad $d(v) = \infty \,\, \forall v \in V - s$} 
\WHILE{$t \notin S$}
 			\STATE{Find $u = \argmin_{z \in \bar S}{d(z)}$}
 			\STATE{$S = S + u$}
 			\STATE{$\bar S = \bar S - u$}
      \FOR{$v \in N(u)\setminus S$}
      			 \IF{$d(u)+ \om(uv) \leq d(v)$ and $d(u) + s^{-uv}_{u} \leq B$}
      				 \STATE{$d(v)=d(u)+ \om(uv)$}
			  \ENDIF
      \ENDFOR
\ENDWHILE
\end{algorithmic}
\end{algorithm}

Let us briefly discuss the potential applications of the latter problem. While shortest paths
often have undesirable behavior in unreliable networks, an optimal solution to ORP might have
a prohibitively large cost. In some applications faults occur rarely, hence it is preferred to
have the cost of the nominal path as low as possible. At the same time, it is necessary that
the cost does not exceed a certain threshold $B$, in every scenario. Consequently, it makes sense
to regard the threshold $B$ as a hard constraint on the robust length, and optimize the length
of the nominal path. Algorithm~\ref{alg:pareto} gives the decision maker the desired freedom
to choose the level of conservatism that she desires.

\section{The Shortest Path Heuristics}\label{app:sp_heuristic}

It is common to use heuristics which rely on shortest paths in routing algorithms. In this section we show that a
naive shortest path routing strategy performs very poorly for $k$-ORP. In fact, the approximation guarantee it provides
grows exponentially with the adversarial budget $k$.
To this end we define more formally the shortest path heuristics, which we denote by 
$R_{SP}$. The routing strategy $R_{SP}$ works as follows. At each vertex $u\in V$ and given a set
of known failed edges $F'$, $R_{SP}$ tries to route the package along a shortest path in the remaining graph
$G-F'$. In the following lemma we show
that $R_{SP}$ is a factor $2^{k+1}-1$ approximation for the optimal routing strategy, in the presence of at most $k$
failed edges.

\begin{lemma}\label{lem:sp_heuristic_general}
 Let $\mathcal{I} = (G,s,t)$ be an instance of $k$-ORP. Then
$$
\mathrm{Val}_k(s,E,R_{SP}) \leq (2^{k+1}-1) y^k(s,E).
$$
\end{lemma}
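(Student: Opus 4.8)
The plan is to prove the bound by induction on the failure parameter $k$. The base case $k=0$ is trivial: with no failures, $R_{SP}$ routes along a shortest $s$-$t$ path, so $\mathrm{Val}_0(s,E,R_{SP}) = d_G(s,t) = y^0(s,E)$, matching the claimed factor $2^{0+1}-1 = 1$. For the inductive step, fix an instance $\mathcal I = (G,s,t)$ and assume the bound holds for $k-1$ on every instance (on every graph, source, and sink). Consider the walk $\theta = \theta_{R_{SP}}(s,E,F)$ realizing the worst-case scenario $F$ with $|F|\le k$, i.e.\ $\ell(\theta) = \mathrm{Val}_k(s,E,R_{SP})$.

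The key step is to decompose this walk at the first probed failure. Let $P$ be the shortest $s$-$t$ path in $G$ that $R_{SP}$ initially commits to. If no edge of $P$ lies in $F$, then $\theta = P$ and $\ell(\theta) = d_G(s,t) \le y^k(s,E)$, and we are done. Otherwise, let $e = xx'$ be the first edge of $P$ (going from $s$) that belongs to $F$. Then $\theta$ consists of the prefix $P[s,x]$, followed by the walk $\theta_{R_{SP}}(x, E, F')$ in the graph where $e$ is already known to be failed, where $F' = F \setminus \{e\}$ has $|F'| \le k-1$. Crucially, once $e$ is known failed, $R_{SP}$ behaves on the remaining instance exactly as the shortest-path heuristic for $(k-1)$-ORP on the graph $G - e$ with source $x$ and sink $t$. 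Hence by the induction hypothesis,
$$
\ell(\theta_{R_{SP}}(x,E,F')) \le \mathrm{Val}_{k-1}(x, E\setminus e, R_{SP}) \le (2^k - 1)\, y^{k-1}(x, E\setminus e).
$$
For the prefix, $\ell(P[s,x]) \le \ell(P) = d_G(s,t) \le y^k(s,E)$ since $P$ is a shortest path and $y^k(s,E) \ge d_G(s,t)$ by monotonicity of the max over scenarios (the empty scenario gives $d_G(s,t)$). Also $\ell(P[s,x]) + \ell(e) \le d_G(s,t)$, and combined with the recursive relation~(\ref{eq:kDPequation}) one gets, taking the neighbor $x'$ of $x$ along $P$,
$$
y^k(s,E) \ge \text{(contribution of the branch through }xx'\text{)} \ge y^{k-1}(x, E\setminus xx'),
$$
after peeling off the prefix $P[s,x]$; more precisely $\ell(P[s,x]) + y^{k-1}(x,E\setminus e) \le$ a quantity bounded by $y^k(s,E)$ via iterated application of~(\ref{eq:kDPequation}) along $P[s,x]$, so $y^{k-1}(x, E\setminus e) \le y^k(s,E)$. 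Putting the pieces together:
$$
\ell(\theta) = \ell(P[s,x]) + \ell(\theta_{R_{SP}}(x,E,F')) \le y^k(s,E) + (2^k-1)\, y^{k-1}(x,E\setminus e) \le y^k(s,E) + (2^k - 1)\, y^k(s,E) = 2^k\, y^k(s,E).
$$
This is slightly too weak; the gap between $2^k$ and $2^{k+1}-1$ suggests the prefix term must itself be charged more carefully, or the induction must carry a refined invariant separating the "already paid" prefix length from the "remaining" robust cost.

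I expect the main obstacle to be exactly this bookkeeping: getting the constant to be $2^{k+1}-1$ rather than, say, $2^k$ or $2^{k+1}$. The clean way is likely to strengthen the inductive statement to bound $\ell(\theta_{R_{SP}}(u,E',F))$ in terms of $d_{G'}(u,t) + (2^{k+1}-2)\,y^k(u,E')$ where $G' = (V,E')$ — i.e., "one shortest-path's worth plus $(2^{k+1}-2)$ times the potential" — so that the recursion on the $(2^{k+1}-2)$ coefficient doubles correctly ($1 + 2(2^k - 1) = 2^{k+1}-1$) while the leading shortest-path term is absorbed into $y^k$ at the end. I would verify that $d_G(x,t) \le y^{k-1}(x,E\setminus e)$ (so the "shortest-path's worth" from the inductive call can be folded into the potential of the parent call) and that $d_G(s,t)$ at the top bounds the whole prefix. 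The recursive length relation~(\ref{eq:kDPequation}) and the monotonicity observations from Section~\ref{sec:k_is_one} are the only structural facts needed; everything else is arithmetic on the recursion $a_k = 1 + 2 a_{k-1}$, $a_0 = 0$, giving $a_k = 2^{k+1} - 2$ and hence the final factor $a_k + 1 = 2^{k+1} - 1$.
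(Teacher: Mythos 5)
Your induction on $k$ with a decomposition at the \emph{first} probed failure has a genuine gap: the key inequality $y^{k-1}(x,E\setminus e)\leq y^k(s,E)$ is false in general. The recursion~(\ref{eq:kDPequation}) is a \emph{minimum} over neighbors, so it only controls the potential along the branch that an optimal strategy actually takes; your vertex $x$ lies on the shortest path chosen by the heuristic, which may be an entirely different branch. The paper's own tightness construction (Example~\ref{ex:general_sp_bad}) is a counterexample: there the nominal shortest path is $s\to u_1\to t$, the first probed failure is $u_1t$, and $y^{k-1}(u_1,E\setminus u_1t)$ is roughly $2M$ while $y^k(s,E)=M+1$. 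The same example shows that the bound $\ell(\theta)\leq 2^k\,y^k(s,E)$ you arrive at cannot hold, since the heuristic's worst-case ratio approaches $2^{k+1}-1>2^k$; note that your conclusion is too \emph{strong}, not ``too weak'' as you write, which is itself a signal that the derivation is unsound. Your proposed repair (the refined invariant $d_{G'}(u,t)+(2^{k+1}-2)\,y^k(u,E')$) still measures the tail against the potential of the sub-instance rooted at $x$, so it runs into the same obstruction.

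The missing idea is to keep every comparison against the \emph{same} quantity $OPT=y^k(s,E)$. The paper fixes a worst-case scenario $F$ and inducts not on $k$ but on the number $i$ of failures that $R_{SP}$ actually encounters, decomposing the walk at the \emph{last} probed failure rather than the first. The prefix up to the $i$-th failure coincides with the entire walk under the scenario $F-e$ on the same instance with the same source, so the inductive hypothesis bounds it by $(2^{i}-1)\,OPT$ directly, with no change of source or of potential. The tail is a shortest path in $G-F$ from that point, and it is bounded by exhibiting a competing walk: backtrack along the whole prefix to $s$ (another $(2^{i}-1)\,OPT$) and then follow the optimal strategy's walk for scenario $F$ (at most $OPT$). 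This yields $(2^{i}-1)+(2^{i}-1)+1=2^{i+1}-1$ and is precisely where the doubling in your recursion $a_k=1+2a_{k-1}$ comes from; without the backtracking construction, the factor $2$ has no source.
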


\begin{proof}
 Let $OPT = y^k(s,E)$, $R^* = \argmin_R {\mathrm{Val}_k(s,E,R)}$ and $Q^0 = \theta_{R^*}(s,E, \emptyset)$ be the corresponding
nominal path. Consider a set $F$ of failed edges with $|F| \leq k$. Define $\omega = \theta_{R_{SP}}(s,E,F)$ to be
the walk followed by $R_{SP}$ in $G-F$, and let $\omega = (s = u_1, u_2, \cdots, u_m = t)$ be the corresponding sequence of vertices.
We divide our analysis according to the
number of failed edges encountered by $R_{SP}$.
We prove by induction on $i$, that if the routing strategy encountered a total of $i\leq k$ failed edges then 
\begin{equation}\label{eq:induction}
\ell(\omega) \leq (2^{i+1} - 1) OPT. 
\end{equation}
The base case $i=0$ corresponds to scenarios in which no failed edge is encountered in the routing.
In this case $\omega$ is simply a shortest $s$-$t$ path in $G$, hence $\ell(\omega) \leq OPT$, as required.
Assume next that (\ref{eq:induction}) holds for every $j<i$ and consider the case that the routing encounters
exactly $i$ failed edges. We can assume without loss of generality that all edges of $F$ were probed and
$|F| =i$.

Let $r < m$ be such that $u_r$ is the vertex incident to the $i$'th failed edge
in the routing. In other words, before reaching $u_r$, the routing probed exactly $i-1$ failed edges. Let 
$e = \{u_r,w\}$ be the $i$'th failed edge probed by the routing strategy. 
Consider an execution of the routing strategy on the same instance with the different failure scenario 
$F' = F-e$. The resulting walk $\omega'$ will have the first $r$ vertices in common with $\omega$,
namely the sub-walk $\sigma = (s=u_1,\cdots, u_r)$ will appear in both walks. By the inductive hypothesis
we have that $\ell(\sigma) \leq \ell(\omega') \leq (2^i-1)OPT$. It remains to bound the length of the tail of
$\omega$ from $u_r$ until $u_m=t$ to complete the proof. To this end recall that $R_{SP}$ routes the
package along the shortest remaining path in the graph. Since the last failure encountered by $R_{SP}$
is $e$, the remaining path is simply the shortest $u_r$-$t$ path in $G-F$. To bound the length of this
path we construct a $u_r$-$t$ walk $\theta$ as follows. First $\theta$ traces the entire route taken by
$R_{SP}$ back to $s$ and then uses the walk that $R^*$ would use to reach $t$ from $s$ in the scenario
$F$. Clearly we have $\ell(\theta) \leq (2^i-1)OPT + OPT$. Furthermore, this walk is intact in $G-F$. This
gives the required bound $\ell(\omega) \leq (2^i-1) OPT + (2^i-1)OPT + OPT = (2^{i+1}-1)OPT$
and finishes the proof. \qed
\end{proof}

The bound obtained in Lemma~\ref{lem:sp_heuristic_general} seems crude at first glance. In particular,
in the inductive step we follow the entire walk performed so far backwards to reach $s$ and start over.
In the following example we show that the bound of Lemma~\ref{lem:sp_heuristic_general} is tight.

\begin{example}\label{ex:general_sp_bad}
 Let $M \in \mathbb{Z}_+$ be a large integer. Consider the following instance $\mathcal{I} = (G,s,t)$ of $k$-ORP.
The graph contains $k+1$ parallel edges connecting $s$ and $t$ with length $M+1$.
In addition the graph contains a path $(s,u_1, \cdots, u_{k})$ of length $k+1$. The edge $su_1$ has length $M$ 
and every edge $u_iu_{i+1}$ has length $2^iM$. Finally, the vertices $u_1, \cdots, u_k$ are connected to $t$ with
edges of length zero. The construction is illustrated in Figure~\ref{fig:sp_heuristic_bad}.

Consider the failure scenario, which fails all edges $u_it$ for $i\in [k]$. The routing strategy $R_{SP}$ will
follow the path $(s, u_1, \cdots, u_k)$, then follow it back to $s$ and then take one of the edges with length $M+1$
to $t$. The total length of this walk is $2(M + 2M + \cdots + 2^k M) + M + 1 = (2^{k+1} - 1)M + 1$. At the same
time the optimal routing strategy routs the package along the edges with length $M+1$ with a worst-case cost of $M+1$.
The ratio between the two numbers tends to $2^{k+1} -1$ as $M$ tends to infinity. 
\end{example}

\begin{figure}[h]
\begin{center}
\begin{tikzpicture}[scale=0.8]

\node (s) at (0,0) [circle,draw=blue!100,fill=blue!10,thick,inner sep=1pt,minimum size=6mm] {$s$};
\node (t) at (6,0) [circle,draw=blue!100,fill=blue!10,thick,inner sep=1pt,minimum size=6mm] {$t$};

\node (u1) at (3,0) [circle,draw=black!100,fill=black!100,thick,inner sep=1pt,minimum size=2mm] {};
\node [label={[label distance=0mm] 135: $u_1$}] at (u1) {};
\node (u2) at (4,1.8) [circle,draw=black!100,fill=black!100,thick,inner sep=1pt,minimum size=2mm] {};
\node [label={[label distance=0mm] 135: $u_2$}] at (u2) {};
\node (u3) at (6,3) [circle,draw=black!100,fill=black!100,thick,inner sep=1pt,minimum size=2mm] {};
\node [label={[label distance=0mm] 45: $u_3$}] at (u3) {};

\node (ui) at (8.2,1.5) [circle,draw=black!100,fill=black!100,thick,inner sep=1pt,minimum size=2mm] {};
\node [label={[label distance=0mm] 45: $u_i$}] at (ui) {};
\node (ui1) at (9,0) [circle,draw=black!100,fill=black!100,thick,inner sep=1pt,minimum size=2mm] {};
\node [label={[label distance=0mm] 0: $u_{i+1}$}] at (ui1) {};

\node (uk) at (7.5,-1.8) [circle,draw=black!100,fill=black!100,thick,inner sep=1pt,minimum size=2mm] {};
\node [label={[label distance=0mm] 0: $u_{k-1}$}] at (uk) {};
\node (uk1) at (6,-2.6) [circle,draw=black!100,fill=black!100,thick,inner sep=1pt,minimum size=2mm] {};
\node [label={[label distance=0mm] 180: $u_{k}$}] at (uk1) {};


\draw [-,black,thick] (u3) to node [auto] {} (ui);
\draw [-,black,thick] (ui1) to node [auto] {} (uk);

\node (dots1) at (7.2,2.3) [circle,draw=blue!0,fill=blue!0,thick,inner sep=1pt,minimum size=12mm] {$\ddots$};
\node (dots1) at (8.2,-0.8) [circle,draw=blue!0,fill=blue!0,thick,inner sep=1pt,minimum size=12mm] {$\iddots$};

\draw [-,black,thick] (s) to node [auto] {$M$} (u1);
\draw [-,black,thick,dashed] (u1) to node [auto] {$0$} (t);
\draw [-,black,thick] (u1) to node [auto] {$2M$} (u2);
\draw [-,black,thick] (u2) to node [auto] {$4M$} (u3);
\draw [-,black,thick,dashed] (u2) to node [auto] {$0$} (t);
\draw [-,black,thick,dashed] (u3) to node [auto] {$0$} (t);
\draw [-,black,thick] (ui) to node [auto] {$2^i M$} (ui1);
\draw [-,black,thick,dashed] (ui) to node [auto] {$0$} (t);
\draw [-,black,thick,dashed] (ui1) to node [auto] {$0$} (t);
\draw [-,black,thick] (uk) to node [auto] {$2^{k-1} M$} (uk1);
\draw [-,black,thick,dashed] (uk) to node [auto] {$0$} (t);
\draw [-,black,thick,dashed] (uk1) to node [auto] {$0$} (t);

\node (dummy) at (3,-1.6) [circle,draw=blue!0,fill=blue!0,thick,inner sep=1pt,minimum size=12mm] {$\cdots$};
\draw [-,black,very thick,out=-30,in=210] (s) to node [auto] {$M+1$} (t);
\draw [-,black,very thick,out=-45,in=225] (s) to node [auto] {} (t);
\node (dummy) at (3,-2.4) [circle,draw=blue!0,fill=blue!0,thick,inner sep=1pt,minimum size=12mm] {$M+1$};
\draw [-,black,very thick,out=-90,in=270] (s) to node [auto] {} (t);

\end{tikzpicture}
\end{center}
\caption{A bad example for the shortest path heuristic. The dashed edges correspond to the worst case scenario.}\label{fig:sp_heuristic_bad}
\end{figure}
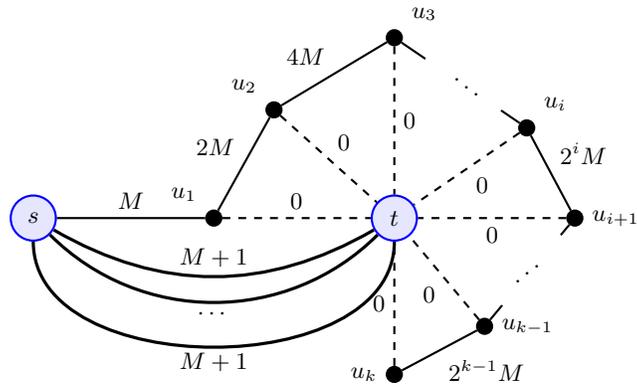


\begin{figure}[h]
\begin{center}
\begin{tikzpicture}[scale=0.8]

\draw [-,white] (0,0) to (0,13);

\end{tikzpicture}
\end{center}
\end{figure}

\end{document}